\documentclass{article}

 \usepackage[final]{eiml_style_2025}

\usepackage[utf8]{inputenc} 
\usepackage[T1]{fontenc}    
\usepackage{hyperref}       
\usepackage{url}            
\usepackage{booktabs}       
\usepackage{amsfonts}       
\usepackage{nicefrac}       
\usepackage{microtype}      
\usepackage{xcolor}         
\usepackage{amsmath}
\usepackage{amsthm}
\usepackage{graphicx}
\usepackage{caption}    
\usepackage{subcaption} 
\usepackage{natbib} 
\usepackage[linesnumbered,ruled,vlined]{algorithm2e}

\newtheorem{thm}{Theorem}
\newtheorem{Assumption}{Assumption}

\newcommand{\x}{\mathbf{x}}

\newcommand{\X}{\mathbf{X}}

\renewcommand{\P}{{\mathbb{P}}}
\newcommand{\ourmethod}{\texttt{CP4SBI}}

\newcommand{\locartourmethod}{\texttt{CP4SBI}}
\usepackage{amsmath}

\title{A Three-Way Testing Framework for Quantifying Epistemic Calibration Uncertainty in SBI}

\author{%
  Luben M. C. Cabezas \\
  DEs, ICMC\\
  Federal Univ. of São Carlos, Univ. of São Paulo, Brazil \\
  Univ. Grenoble Alpes, Inria, \\
  CNRS, Grenoble INP, LJK, France \\
  \texttt{lucruz45.cab@gmail.com} \\
  \And
  Pedro L. C. Rodrigues \\
Univ. Grenoble Alpes, Inria, \\
CNRS, Grenoble INP, LJK, France \\
  \texttt{pedro.rodrigues@inria.fr} \\
  \And
  Rafael Izbicki \\
  DEs\\
  Federal Univ. of São Carlos, Brazil \\
  \texttt{rafaelizbicki@gmail.com} \\
}

\begin{document}

\maketitle

\begin{abstract}
  Current experimental scientists increasingly rely on simulation-based inference (SBI) to invert complex models with intractable likelihoods. A primary goal in these settings is to obtain credible regions with valid coverage. While recent model-agnostic conformal calibration methods have succeeded in constructing credible sets with prescribed local Bayesian coverage, their approximate nature introduces inherent epistemic uncertainty in the calibration process. In this work, we propose a novel tool for diagnosing calibration uncertainty. Our approach is based on a simple three-way hypothesis testing procedure. We demonstrate how this tool can be used to effectively assess necessary simulation budgets for calibration sets and analyze the epistemic uncertainty associated with cutoff estimation.
\end{abstract}

\section{Introduction}
Modern scientific research relies heavily on complex stochastic simulators which generate synthetic data ($\x \in \mathcal{X}$) from input parameters ($\theta \in \Theta$). These models are ubiquitous across fields like biology \citep{min2017deep, ramirez2024ai}, astrophysics \citep{von2025kids, jeffrey2025dark,carzon2025trustworthy}, and neuroscience \citep{rodrigues2018riemannian, bernardo2024simulation}, but they pose a fundamental inverse problem: inferring the parameter values that best explain observed data \citep{tarantola2005inverse, casella2024statistical}. Since these simulators typically have an intractable likelihood function $p(\x \mid \theta)$, classical Bayesian techniques like MCMC are infeasible. Simulation-based inference (SBI) 
\citep{cranmer2020frontier,masserano2023simulator}
can address this by estimating the posterior distribution $p(\theta \mid \x)$ directly from simulated data, often leveraging deep generative models—such as normalizing flows, score-based diffusion models, and flow matching—for flexible and expressive approximations.

However, despite their empirical success, modern SBI methods face a critical reliability issue: lack of proper calibration. As shown by \citet{hermans2021towards}, these techniques often produce overconfident posterior distributions, leading to credible regions $C(\x_{\text{obs}})$ that fail to achieve their target nominal coverage when constructed from approximate posteriors $\hat{p}(\theta \mid \x_{\text{obs}})$. This miscalibration severely compromises the validity of scientific analyses and downstream decision-making that rely on these uncertainty estimates \citep{murphy2022probabilistic}.

To address this challenge and provide robust uncertainty quantification, certain works have leveraged the conformal prediction (CP) framework \citep{shafer2008tutorial, vovk2022algorithmic} to the SBI context. CP methods offer powerful, distribution-free guarantees on coverage probabilities, ensuring credible regions contain the true parameter with the specified frequency. Recent works exploring this direction include those by \citet{patel2023variational}, \citet{baragatti2024approximate}, and \citet{cabezas2025cp4sbi}. Specifically, the work in \citet{cabezas2025cp4sbi} introduced a novel estimator-agnostic approach capable of calibrating credible regions with local coverage guarantees through tree-based partitions, compatible with both density-based and sample-based SBI estimators.

However, these methods are still approximations, introducing inherent epistemic uncertainty in the calibration step. Specifically, these approaches attempt to approximate the target region {$C(\x) = \{\theta \in \Theta: s(\theta ; \x) < t(\x)\}$}, where $s(\theta ; \x)$ is a non-conformity score (e.g., $s(\theta; \mathbf{x}) = -\hat{p}(\theta \mid \mathbf{x})$, corresponding to the HPD) and the cutoff $t(\x)$ is defined by the conditional coverage requirement $\P(\theta \in C(\X) \mid \X = \x)$.  Since the cutoff $t(\x)$ must be estimated using a finite calibration set, the resulting point estimator, $\hat{t}$, possesses inherent uncertainty. Quantifying this uncertainty is crucial for understanding and diagnosing potential deficiencies in the calibration method itself.

Building upon the method presented by \citet{cabezas2025cp4sbi}, we propose a novel heuristic to quantify and diagnose this calibration epistemic uncertainty using a simple three-way hypothesis testing procedure. Through benchmark examples, we demonstrate that this tool is effective for assessing the simulation budget for the calibration set and for systematically analyzing the epistemic uncertainty associated with the cutoff estimation. Our method thus provides a valuable diagnostic capability for credible region construction in SBI. A flowchart illustrating our approach is presented in Figure \ref{fig:epistemic_CP4SBI_diagram}.

\begin{figure}[htb]
    \centering
    \includegraphics[width=1\linewidth,]{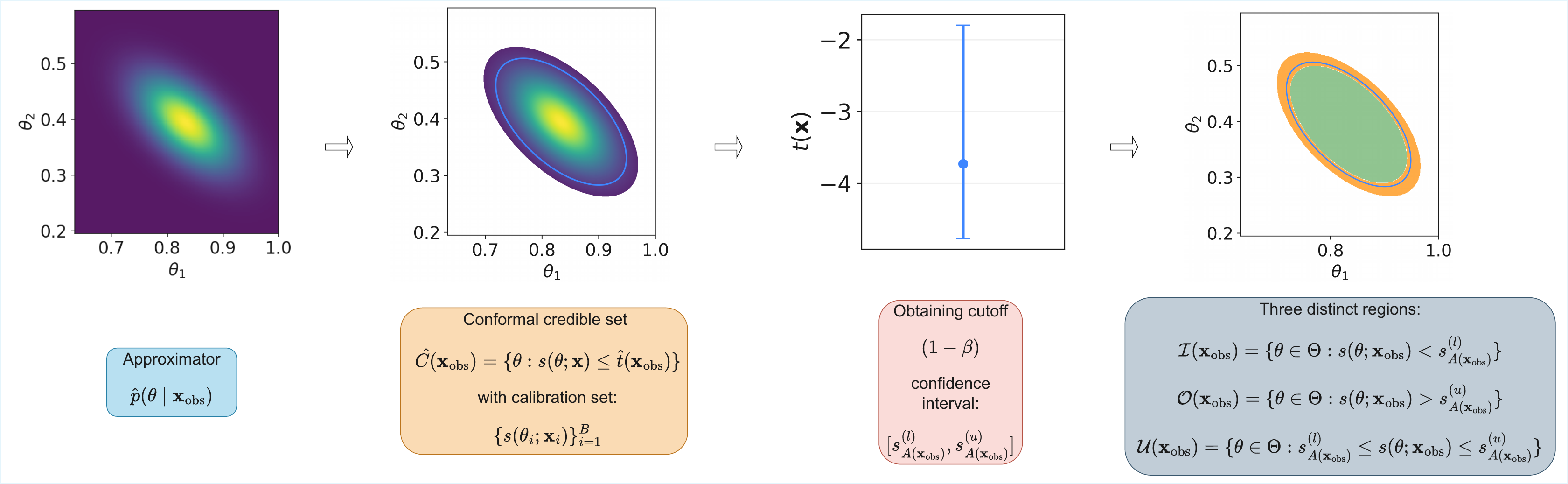}
    \caption{Flowchart of our approach to evaluate calibration epistemic uncertainty on estimated credible sets. We begin with a trained posterior approximator (blue panel) obtained from a fixed training set. Then, using a calibration set, we derive a conformalized credible set $\hat{C}(\mathbf{x}_{\text{obs}}) = \{\theta : s(\theta; \mathbf{x}) \leq \hat{t}(\mathbf{x}_{\text{obs}})\}$ for a specific observation $\mathbf{x}_{\text{obs}}$ (orange panel, blue contour). For this estimated cutoff $\hat{t}(\mathbf{x}_{\text{obs}})$, we derive a non-parametric $(1-\beta)$ confidence interval based on the score's order statistics (red panel). This confidence interval is then used to define three distinct regions (grey panel): inside (light green), undetermined (dark orange), and outside (white), where the size of the undetermined region directly conveys the amount of epistemic uncertainty in the credible set estimation.}
    \label{fig:epistemic_CP4SBI_diagram}
\end{figure}

\section{Relation to Other Work}

\textbf{Posterior distribution approximation.}
Over recent years, various SBI strategies have been developed for posterior estimation when the likelihood is intractable. Initial methods include Approximate Bayesian Computation (ABC) \citep{marin2012approximate, sisson2018handbook}, which approximates the posterior via rejection sampling. These were followed by more efficient amortized methods \citep{cranmer2015approximating, gutmann2016bayesian, izbicki2019abc}: neural posterior estimation (NPE) \citep{papamakarios2016fast, lueckmann2017flexible, greenberg2019automatic, deistler2022truncated}, neural likelihood estimation (NLE) \citep{papamakarios2019sequential, frazier2023bayesian}, and density ratio estimators 
\citep{izbicki2014high, hermans2020likelihood, durkan2020contrastive, dalmasso2020confidence, DalmassoLF2I}. More recently, advanced techniques have adapted implicit generative models, such as flow-matching and diffusion models, to SBI \citep{wildberger2023flow, geffner2023compositional, linhart2024diffusion}. In this work, we specifically leverage NPE as the base estimator for deriving credible regions $C(\x)$. A more thorough review of SBI practices and methods is provided by \citet{deistler2025simulation}.

\textbf{Conformal Prediction applied to SBI.}
Existing work applying standard Conformal Prediction (CP) to SBI primarily provides marginal coverage guarantees \citep{baragatti2024approximate, patel2023variational}.
However, these techniques generally fail to provide locally or approximately conditionally valid credible regions. To address this limitation, \citet{cabezas2024distribution} uses local conformal methods to improve frequentist coverage of these sets, while \locartourmethod{}, introduced by \citet{cabezas2025cp4sbi}, aims to improve Bayesian posterior coverage. 

This locally adaptive partitioning technique originated in the context of supervised learning and prediction \citep{cabezas2025regression} and is seamlessly extended to the SBI context by \citet{cabezas2025cp4sbi}. \locartourmethod{} works by adaptively partitioning the feature space $\mathcal{X}$ using a regression tree trained to predict the nonconformity score $s(\theta;\mathbf{x})$. The derived partition, $\mathcal{A}$, isolates regions with approximately identically distributed scores, enabling a localized and data-adaptive estimation of the cutoff threshold $t(\mathbf{x})$ that yields better approximations of conditionally valid credible sets. Further technical details on the method are provided in Appendix \ref{appendix:locart_method}.

\textbf{Epistemic Uncertainty.}  When analyzing stochastic problems, we distinguish between two sources of uncertainty: aleatoric and epistemic  \citep{hullermeier2021aleatoric,Izbicki2025,cabezasepistemic}. Aleatoric uncertainty is the inherent, irreducible randomness in the data-generating process—specifically, that the observation $\x$ does not uniquely determine the parameter $\theta$, a property encoded by the posterior $p(\theta \mid \x)$.  Consequently, estimating the credible region $C(\x)$  is primarily an attempt to quantify this aleatoric uncertainty. Conversely, \textit{epistemic uncertainty} arises from limitations in data and the resulting lack of knowledge about the true data-generating function or the estimated model. In this context, understanding how the estimator $\hat{C}(\x)$ behaves as a function of the finite calibration data is an example of quantifying and assessing epistemic uncertainty. To the best of our knowledge, this assessment has not been performed in the SBI literature.

\textbf{Three-way hypothesis testing.} In the statistics literature, traditional hypothesis testing operates under a two-decision paradigm concerning a hypothesis $H_0$: either rejecting or failing to reject it \citep{casella2024statistical, degroot1986probability}. In contrast, three-way hypothesis testing offers a crucial third decision: to reject, accept, or remain in doubt about $H_0$ \citep{berg2004no,Izbicki2025REACT}. This third option of ``remaining in doubt'' directly aligns with situations where insufficient data prevents a confident decision, reflecting a state of high epistemic uncertainty. Furthermore, these methods simultaneously control both Type I and Type II errors \citep{esteves2016logical,coscrato2020agnostic}, yielding strong conclusions with a controlled probability of error. Therefore, in our work, we leverage this three-way testing framework for each parameter $\theta \in \Theta$, framing the resulting agnostic region as the region of high epistemic uncertainty whose size directly reports the amount of uncertainty associated with the estimated credible region.









\section{Uncertainty on credible sets}
\label{sec:methods}
In what follows, we assume to have access to a posterior estimator $\hat{p}(\theta \mid \x)$ previously trained by some SBI approach, e.g. neural posterior estimation. Also, we let $\{(\theta_1, \X_1), \ldots, (\theta_B, \X_B)\}$ be a calibration dataset drawn independently from the joint distribution $\pi(\theta)p(\x\mid\theta)$ and not used for estimating $\widehat{p}(\theta\mid\x)$. Concretely, each  $(\theta,\X)$ is drawn by first sampling $\theta$ from a prior distribution $\pi(\theta)$ and then $\X \mid\theta$ from the statistical model $p(\x\mid\theta)$, which is the forward simulator in SBI. Following \cite{cabezas2025cp4sbi}, we fix the conformity score to $s(\theta; \mathbf{x}) = -\hat{p}(\theta \mid \mathbf{x})$.

Our goal is to attempt to recover the target credible region
 \(  C(\x) := \{\theta \in \Theta \mid s(\theta;\X) \leq  t(\x) \}\), where $t(\x)$ is the $1-\alpha$ quantile of the distribution of $s(\theta;\X)|\x$.
 Our starting point is \locartourmethod{} \citep{cabezas2025cp4sbi}, which approximates
$t(\x)$ in the following way.
First, we build a partition of $\mathcal{X}$, $\mathcal A$.
For each observation $\x$, let $A(\x)$ denote the region of the partition $\mathcal{A}$ that contains it. We then define $I_{A(\x)}$ as the set of indices of all calibration points lying in the same region, $I_{A(\x)} = \{b \in \{1,\ldots,B\} : \x_b \in A(\x)\}$. 
Finally, we compute the estimated cutoff $\hat{t}(\x)$  as the adjusted $\alpha$-quantile of the set $T_{A(\x)} = \{s(\theta_b;\X_b) : b \in I_{A(\x)}\}$.

In this section, we describe how we can evaluate the uncertainty about $t(\x)$ that comes from this estimation process, and how to propagate it to the estimated credible region. 
 Let $Z \sim \text{Binomial}(|I_{A(x)}|, 1 - \alpha)$, and choose $l$ to be the largest value such that
 $\mathbb{P}(l - 1 \leq Z) \leq \beta/2$, and $u$ be the smallest value such that
 $\mathbb{P}(Z \geq u) \leq \beta/2$.
 Next, 
let $s_{A(\x)}^{(i)}$ be the $i$-th order statistic of the scores  in $T_{A(\x)}$.
 An approximate   $(1 - \beta)$-level confidence set for the unknown cutoff $t(\x)$ is given by
\begin{align}
[s_{A(\x)}^{(l)}, s_{A(\x)}^{(u)}].
\end{align}


Finally, we propagate this uncertainty about $t(\x)$ to the estimated confidence sets by defining three distinct regions based on the parameter score $s(\theta; \x)$:
\begin{align*}
  &\mathcal{I}(\x)=\{\theta \in \Theta: s(\theta;\x) \leq  s_{A(\x)}^{(l)} \},  \\
  &\mathcal{O}(\x)=\{\theta \in \Theta: s(\theta;\x) \geq  s_{A(\x)}^{(u)} \}, \text{ and }\\
  &\mathcal{U}(\x)=\{\theta \in \Theta : s_{A(\x)}^{(l)} \leq s(\theta;\x) \leq  s_{A(\x)}^{(u)} \}.
\end{align*}
This structure directly implements a three-way hypothesis testing framework \citep{esteves2016logical}. Here, $\mathcal{I}(\x)$ is the \textbf{acceptance region} (parameters confidently \textbf{inside}), $\mathcal{O}(\x)$ is the \textbf{rejection region} (parameters confidently \textbf{outside}), and $\mathcal{U}(\x)$ is the \textbf{agnostic region} (parameters whose status remains \textbf{uncertain} due to cutoff estimation error). The existence of $\mathcal{U}(\x)$ explicitly quantifies the epistemic uncertainty in the boundary estimation; it contains parameters whose status is uncertain solely due to the finite size of the calibration set used to estimate the cutoff. The size of $\mathcal{U}(\x)$ is therefore directly controllable by increasing the calibration simulation budget $B$.

Theorem \ref{thm:coverage} formalizes how these regions quantify the probability of drawing incorrect conclusions. It makes the following assumption.
\begin{Assumption}
\label{ass:coverage}
For fixed $\x \in \mathcal{X}$, the statistics $s_{A(\x)}^{(l)}$ and $s_{A(\x)}^{(u)}$ satisfy
\[
\P\big(t(\x) \le s_{A(\x)}^{(l)} \big) \le \beta/2
\quad \text{and} \quad
\P\big(t(\x) \ge s_{A(\x)}^{(u)} \big) \le \beta/2,
\]
where the probability is taken with respect to the randomness of the calibration set.
\end{Assumption}
In Appendix \ref{appendix:proofs} we explain why such assumptions are approximately valid for our problem.

\begin{thm}
\label{thm:coverage}
Fix $\x \in \mathcal{X}$ and $\theta \in \Theta$.  
Let $(s_{A(\x)}^{(l)}, s_{A(\x)}^{(u)})$ satisfy  Assumption \ref{ass:coverage}. 
Then:
\begin{itemize}
    \item If $\theta \notin C(\x)$, we have $\P\big(\theta \in \mathcal{I}(\x)\big) \le \beta/2$;
    \item If $\theta \in C(\x)$, we have $\P\big(\theta \in \mathcal{O}(\x)\big) \le \beta/2$,
\end{itemize}
where the probability is taken with respect to the randomness of the calibration set.
\end{thm}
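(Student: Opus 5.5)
The argument is a direct event-inclusion argument for each bullet, followed by an application of Assumption \ref{ass:coverage}. Throughout, $\x$ and $\theta$ are fixed, so $s(\theta;\x)$ and $t(\x)$ are deterministic numbers. The only randomness comes from the calibration set, through the order statistics $s_{A(\x)}^{(l)}$ and $s_{A(\x)}^{(u)}$. Membership of $\theta$ in $C(\x)$ is therefore deterministic, while membership in $\mathcal{I}(\x)$ or $\mathcal{O}(\x)$ is a random event.

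\textbf{First bullet.} Suppose $\theta \notin C(\x)$. By the definition of $C(\x)$ this means $s(\theta;\x) > t(\x)$. The event $\{\theta \in \mathcal{I}(\x)\}$ is exactly $\{s(\theta;\x) \le s_{A(\x)}^{(l)}\}$. On this event we have $t(\x) < s(\theta;\x) \le s_{A(\x)}^{(l)}$, so in particular $t(\x) \le s_{A(\x)}^{(l)}$. Hence
\[
\{\theta\in\mathcal{I}(\x)\}\subseteq\{t(\x)\le s_{A(\x)}^{(l)}\}.
\]
Monotonicity of probability and the first inequality of Assumption \ref{ass:coverage} then give $\P(\theta\in\mathcal{I}(\x))\le\beta/2$.

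\textbf{Second bullet.} Suppose $\theta \in C(\x)$, so $s(\theta;\x)\le t(\x)$. The event $\{\theta\in\mathcal{O}(\x)\}$ is $\{s(\theta;\x)\ge s_{A(\x)}^{(u)}\}$. On this event we have $s_{A(\x)}^{(u)}\le s(\theta;\x)\le t(\x)$, so the event is contained in $\{t(\x)\ge s_{A(\x)}^{(u)}\}$. The second inequality of Assumption \ref{ass:coverage} then bounds its probability by $\beta/2$.

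\textbf{Potential obstacle.} The only delicate point is the boundary convention. $C(\x)$ is defined with ``$\le t(\x)$'', so $\theta\notin C(\x)$ yields a strict inequality. In the first bullet this strictness is not needed, because the assumption is already stated with the non-strict event $t(\x)\le s_{A(\x)}^{(l)}$. In the second bullet the chain $s_{A(\x)}^{(u)}\le s(\theta;\x)\le t(\x)$ matches the non-strict event $t(\x)\ge s_{A(\x)}^{(u)}$ in the assumption exactly. So I would only need to check that the inequality directions in the definitions of $\mathcal{I}(\x)$ and $\mathcal{O}(\x)$ align with Assumption \ref{ass:coverage}, and they do. No distributional argument is needed, since all the statistical content is packaged into the assumption.
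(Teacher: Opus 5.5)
Your proposal is correct and follows essentially the same route as the paper's proof: in each case you use the fixed inequality between $s(\theta;\x)$ and $t(\x)$ to obtain the pointwise inclusions $\{\theta\in\mathcal{I}(\x)\}\subseteq\{t(\x)\le s_{A(\x)}^{(l)}\}$ and $\{\theta\in\mathcal{O}(\x)\}\subseteq\{t(\x)\ge s_{A(\x)}^{(u)}\}$, then conclude by monotonicity and Assumption~\ref{ass:coverage}. Your remark on the boundary conventions is accurate: the non-strict inequalities in the assumption match the definitions of $\mathcal{I}(\x)$ and $\mathcal{O}(\x)$.
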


Theorem \ref{thm:coverage} guarantees that the probability of making an incorrect decision—whether wrongly accepting an external parameter or wrongly rejecting an internal one—is controlled by $\beta/2$ in each direction, despite using finite calibration data. Crucially, the agnostic region $\mathcal{U}(\x)$ isolates and identifies those parameters for which this residual epistemic uncertainty remains unresolved.


\section{Experiments}
\label{sec:experiments}
We evaluate our method on two categories of simulation models: the now standard SBI benchmark from \citet{lueckmann2021benchmarking} and a neural mass model (NMM), which is a complex non-linear model from computational neuroscience consisting of a system of stochastic differential equations describing the generation of neural activity on a cortical column~\citep{jansen1995electroencephalogram}. Specifically, we use the implementation detailed by \citet{rodrigues2021hnpe}.

For the SBI benchmarks, we compare our method's performance across three different calibration budget sizes ($B \in \{1000, 2000, 4000\}$) to assess how epistemic uncertainty behaves as the budget changes. We also analyze the method's convergence by tracking how the estimated inside region and its boundary progressively align with the target region—the HPD derived from the true posterior samples available for each benchmark. We highlight one key result in the main text and detail the remainder in Appendix \ref{appendix:exp_details}. For the NMM application, we use a single, large calibration budget ($B = 8000$) to assess the epistemic uncertainty in that complex example.

All further details, including the posterior approximator architecture, training procedures, additional results, and \locartourmethod{} hyperparameters, are provided in Appendix \ref{appendix:exp_details}.

\subsection{SBI benchmarks results}
For the SBI benchmarks, we limit our analysis to two-dimensional parameter spaces with a specific $\x_{\text{obs}}$. Therefore, whenever a benchmark task has parameter dimension greater than two, we restrict our analysis by fitting the NPE approximator exclusively to the first two parameters. Figure \ref{fig:uncertainty_comparison_gaussian_linear_uniform} showcases this specific application using the \textit{Gaussian Linear Uniform} example.
\begin{figure}[!http]
    \centering
    \includegraphics[width=1.0\textwidth]{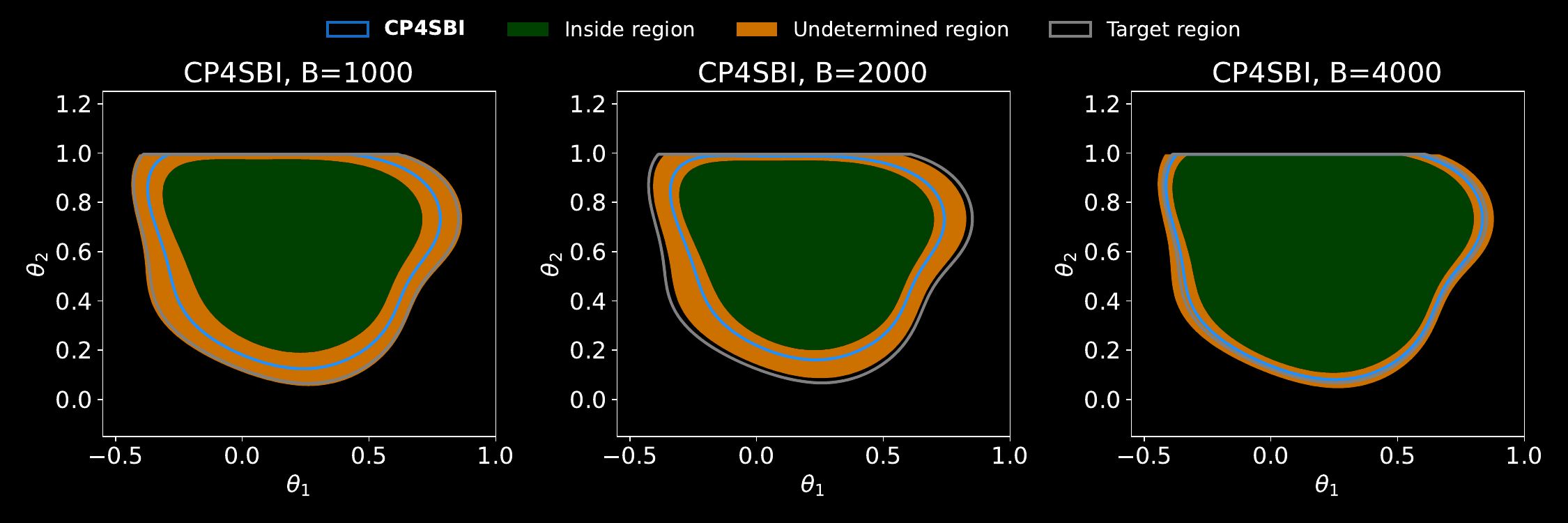}
    \caption{\footnotesize Epistemic uncertainty quantification on the Gaussian Linear Uniform example for a specific $\x_{\text{obs}}$. The areas shaded in dark green and dark orange represent, respectively, the inside (low epistemic uncertainty) and the undetermined (high epistemic uncertainty) region, while the contours in light blue and grey represent, respectively, the \locartourmethod{} and target region. As the calibration budget ($B$) increases, the size of the dark orange area decreases, confirming the expected reduction in epistemic uncertainty.}
    \label{fig:uncertainty_comparison_gaussian_linear_uniform}
\end{figure}

As the calibration budget increases, the undetermined region shrinks, reflecting the expected reduction in epistemic uncertainty when larger calibration sets are used to estimate the credible region. At the same time, the \locartourmethod{} boundary (light blue) and the confident interior region progressively align with the target boundary (gray), indicating improved estimation accuracy.

Similar trends are observed across all other simulators (Figures \ref{fig:easy_medium_epistemic_uncertainty}–\ref{fig:hard_epistemic_uncertainty}, Appendix \ref{appendix:additional_exp}), where the undetermined region consistently contracts as the calibration budget $B$ grows.

\subsection{Application on NMM}
Our experiments assess the three parameters of the NMM by analyzing each unique pair separately, which involves fitting a dedicated NPE approximator and estimating a credible region for every pair. As done for SBI benchmarks, we also consider a specific $\x_{\text{obs}}$ generated by fixed values of the parameters of interest. Figure \ref{fig:uncertainty_regions_jrnmm_largest_budget_row} showcases the application of our method for this simulator, considering a large calibration budget $B = 8000$.
\begin{figure}[http]
    \centering
    \includegraphics[width=1\linewidth]{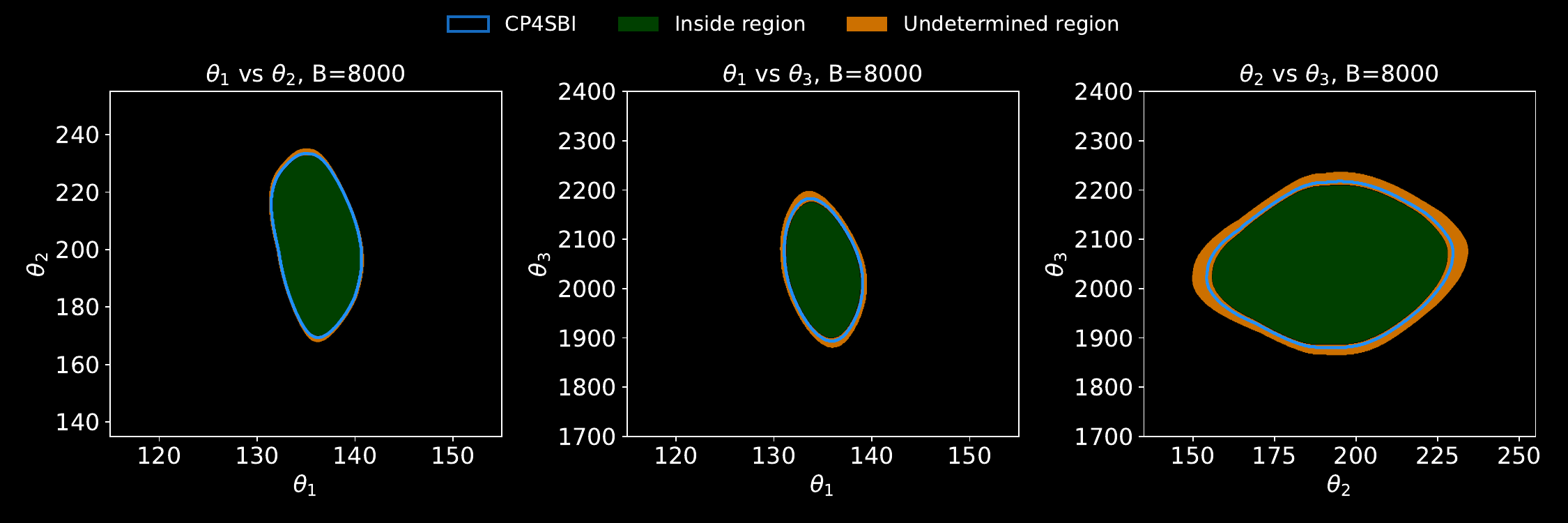}
    \caption{\footnotesize Epistemic uncertainty across parameter pairs in the NMM. The figure shows the estimated regions for three different parameter pairs given a fixed observation $\x_{\text{obs}}$. The dark green areas are confidently inside the credible set, while the dark orange areas highlight the agnostic region (epistemic uncertainty). A clear distinction is seen in uncertainty levels: the first pair has a minimal uncertain region, the second shows moderate uncertainty, and the final pair displays the largest agnostic region.}
    \label{fig:uncertainty_regions_jrnmm_largest_budget_row}
\end{figure}

We observe that the level of epistemic uncertainty varies significantly across the parameter pairs. The first pair, $(\theta_1, \theta_2)$ displays small uncertainty, the second, $(\theta_1, \theta_3)$,  shows medium uncertainty, and the final pair, $(\theta_2, \theta_3)$, exhibits the largest uncertain region. This difference highlights the varying difficulty of estimation across dimensions. Consequently, the latter two combinations warrant more attention, suggesting a possible necessity to increase the calibration budget specifically to derive more precise credible regions for $(\theta_1, \theta_3)$ and $(\theta_2, \theta_3)$.

\section{Conclusions}
In this work, we presented a novel method for representing epistemic uncertainty in the context of credible set estimation for simulation-based inference (SBI). Building upon the work of \citet{cabezas2025cp4sbi} and the three-way hypothesis testing, our approach constructs an undetermined region ($\mathcal{U}$). This region explicitly delineates parameters whose status (inside or outside the target credible set) is highly uncertain due to estimation error, alongside well-defined inside ($\mathcal{I}$) and outside ($\mathcal{O}$) regions where the method is highly confident about the parameter status.

We also show that both $\mathcal{I}$ and $\mathcal{O}$ have controlled probabilities for incorrect conclusions, which provides  theoretical grounding for our epistemic uncertainty analysis. Furthermore, we demonstrate the utility of this diagnostic tool across SBI benchmarks, showing that the undetermined region size effectively tracks the necessary calibration budget, $B$. We apply this to a complex stochastic neural mass model to highlight how epistemic uncertainty varies significantly across different parameter dimensions for real-world data.

For future work, we intend to expand this diagnostic tool and type of analysis to other credible set estimators beyond \locartourmethod{}, providing similar theoretical and empirical applications. Specifically, we aim to apply this method to a wider range of real-world problems, utilizing it as a robust tool for selecting optimal simulation budgets for calibration. The code implementation for our method and experiments is publicly available at \href{https://github.com/Monoxido45/three_way_CP4SBI}{https://github.com/Monoxido45/three\_way\_CP4SBI}.

\section*{Acknowledgments}

This study was financed in part by the Coordenação de Aperfeiçoamento de Pessoal de Nível Superior - Brasil (CAPES)
- Finance Code 001. L.M.C.C is grateful for the fellowship
provided by São Paulo Research Foundation (FAPESP),
grants 2022/08579-7 and 2025/06168-8. PLCR was supported by a national grant managed by the French National Research Agency (Agence Nationale de la Recherche) attributed to the SBI4C project of the MIAI AI Cluster, under the reference ANR-23-IACL-0006. RI is grateful for the financial support of FAPESP (grant 2023/07068-1) and CNPq (grants 305065/2023-8 and 403458/2025-0).

\bibliography{references}


\appendix

\section{Technical details on \locartourmethod{}}
\label{appendix:locart_add_details}
This section details the \locartourmethod{} algorithm's procedure and outlines the necessary computational and implementation details used throughout our experiments.

\subsection{Method details}
\label{appendix:locart_method}
The primary motivation behind \locartourmethod{} is to derive an optimal partition of the feature space $\mathcal{X}$. This partition groups observations ($\mathbf{x}$) that share the same conditional distribution of the conformal score, $s(\theta ; \mathbf{x}) \mid \mathbf{X} = \mathbf{x}$. By achieving this, \locartourmethod{} ensures that within each resulting partition element $A \in \mathcal{A}$, the average coverage probability, $\mathbb{P}(\theta \in C(\mathbf{x}) \mid \mathbf{x} \in A)$, closely approximates the desired local conditional coverage, $\mathbb{P}(\theta \in C(\mathbf{x}) \mid \mathbf{X} = \mathbf{x})$.

To derive this partition, \locartourmethod{} leverages the fact that regression-tree partitions can group similar score distributions in practice \citep{meinshausen2006quantile}. Specifically, it fits a regression tree to the calibration data, using the input features $\mathbf{x}_i$ to predict their corresponding nonconformity scores $s_i$. The final terminal leaves of this fitted tree then define the partition $\mathcal{A}$ that approximates the ideal conditional groupings. This process is fully detailed in Algorithm \ref{alg:cp4sbi_locart}, which is adapted from \citet{cabezas2025cp4sbi}.

\begin{algorithm}[ht]
\caption{\locartourmethod{} algorithm}
\label{alg:cp4sbi_locart}
\KwIn{
    Calibration set $\mathcal{D} = \{(\theta_i, \x_i)\}_{i=1}^B$,
    posterior approximator $\widehat{p}(\theta \mid \x)$,
    conformity score $s(\theta; \X)$,
    nominal level $\alpha$,
    new observation $\x_{\text{obs}}$
}

\textbf{Step I: Score computation} \\
\Indp
1: For each $(\theta_i, \x_i) \in \mathcal{D}$, compute $s_i = s(\theta_i; \x_i)$ using $\widehat{p}$. \\
2: Form the scored dataset $\mathcal{D}' = \{(s_i, \x_i)\}_{i = 1}^B$, then randomly split it into a partition set $\mathcal{D}'_{\text{part}}$ and a cutoff set $\mathcal{D}'_{\text{cut}}$ subsets. \\
\Indm

\textbf{Step II: Partition learning} \\
\Indp
1: Fit a regression tree on $\mathcal{D}'_{\text{part}}$ to predict $s$ from $\x$. \\
2: Use the tree to induce a partition $\mathcal{A} = \{A_1, \dots, A_K\}$ of the feature space. \\
3: Define a region mapping $T: \mathcal{X} \rightarrow \mathcal{A}$ such that $T(\x) = A_j$ if $\x \in A_j$. \\
\Indm

\textbf{Step III: Local quantile estimation} \\
\Indp
1: For each region $A_j \in \mathcal{A}$, define the set of calibration indices $I_j = \{i \mid \x_i \in A_j,\; (s_i, \x_i) \in \mathcal{D}'_{\text{cut}}\}$. \\
2: Compute the local cutoff $t_j$ as the empirical $(1 + 1/|I_j|)(1 - \alpha)$-quantile of $\{s_i\}_{i \in I_j}$. \\
\Indm

\textbf{Step IV: Credible region construction} \\
\Indp
1: Assign $\x_{\text{obs}}$ to region $A_k = T(\x_{\text{obs}})$. \\
2: Return the credible region:
\[
R_{\text{CP4SBI}}(\x_{\text{obs}}) = \left\{ \theta \mid s(\theta; \x_{\text{obs}}) \leq t_k \right\}
\]
\Indm

\KwOut{Credible region $R_{\text{CP4SBI}}(\x_{\text{obs}})$ with marginal and local $1 - \alpha$ coverage}
\end{algorithm}

\subsection{Computational details}
\label{appendix:comp_locart_details}
To derive the partitions for \locartourmethod{} using regression trees, we followed the default procedures from \citep{cabezas2025cp4sbi} and employed two primary stopping criteria: the minimum number of samples per leaf and post-pruning via cost-complexity methods (\textit{ccp\_alpha}) to balance partition complexity and predictive performance. For the small calibration budget ($B=1000$), we set the minimum leaf size to $150$. For all larger budgets ($B=\{2000,4000,8000\}$), we increased this minimum to $300$. All other regression tree hyperparameters were set to the default values provided by the \textit{scikit-learn} library \citep{pedregosa2011scikit}.

Additionally, in accordance with the results of \citet{cabezas2025cp4sbi}, we omit the internal splitting of the calibration set when both training the regression tree and computing local cutoffs. Although splitting is theoretically necessary for strict marginal coverage guarantees, the lack of this internal split has been shown to have a minimal empirical impact on coverage while avoiding a reduction in the sample size available for local cutoff estimation. This choice is therefore maintained across all specified budgets ($B$) to preserve practical performance.

\section{Experiments details and additional results}
\label{appendix:exp_details}
This appendix provides all further experimental details. We specify the NPE model architectures and training budgets and present additional results for the SBI benchmarks.
\subsection{NPE architectures and training budgets}
\label{appendix:npe_architectures}
The Neural Posterior Estimation (NPE) was implemented using conditional normalizing flows \citep{greenberg2019automatic, papamakarios2021normalizing} via the \texttt{sbi} package \citep{BoeltsDeistler_sbi_2025}. For the SBI Benchmarks, we tailored the training budgets to task complexity: small tasks used $B_{\text{train}} = 5000$, medium tasks used $B_{\text{train}} = 10000$, and harder tasks used $B_{\text{train}} = 20000$. These specific budgets are listed in Table \ref{tab:budgets_used}. The architecture utilized was the default \texttt{sbi} implementation, based on Masked Autoregressive Flows (MAF) \citep{papamakarios2017masked}. For the NMM Experiment, we utilized a fixed training budget of $B_{\text{train}} = 10000$. An alternative architecture, the Neural Spline Flow (NSF) \citep{durkan2019neural}, was employed, trained for a maximum of $250$ epochs with a batch size of $100$ and early stopping set to $20$.
\begin{table}[http]
\caption{Training budget used in each task}
\label{tab:budgets_used}
\centering
\begin{tabular}{cc}
\hline
\textbf{Benchmarks}     & \textbf{Training Budget} \\ \hline
Bernoulli GLM           & 10000                    \\
Gaussian Linear         & 5000                     \\
Gaussian Linear Uniform & 5000                     \\
Gaussian Mixture        & 10000                    \\
Lotka-volterra          & 20000                    \\
Two Moons               & 5000                     \\
SIR                     & 20000                    \\
SLCP                    & 20000                    \\ \hline
\end{tabular}
\end{table}

\subsection{Additional results for SBI benchmarks}
\label{appendix:additional_exp}
We visualize all remaining epistemic uncertainty quantification results for the SBI benchmarks in Figure \ref{fig:easy_medium_epistemic_uncertainty} (easy and medium tasks) and Figure \ref{fig:hard_epistemic_uncertainty} (hard tasks).

Figure \ref{fig:easy_medium_epistemic_uncertainty} reveals distinct behavior between task difficulties: for easy tasks (Gaussian Linear and Two Moons), the initial epistemic uncertainty is minimal and quickly vanishes, with inside region remaining relatively stable. In contrast, for medium tasks (Gaussian Mixture and Bernoulli GLM), the initial uncertainty starts slightly larger and decreases more gradually, accompanied by more noticeable changes in the inside region's shape. In common, we notice that for both difficulties, the epistemic uncertainty decreases as the calibration budget $B$ increases.

Concerning the harder tasks, Figure \ref{fig:hard_epistemic_uncertainty} reveals three distinct patterns: the SLCP task shows an initial increase in the undetermined region before it ultimately diminishes; the SIR task starts with slight epistemic uncertainty which decreases rapidly; and the Lotka-Volterra task begins with high uncertainty, which decreases gradually as a function of $B$. Despite these behavioral differences, all tasks exhibit a clear improvement: the interior region's format changes and converges toward the target boundaries as the calibration budget increases. In all these complex settings, the common and key takeaway is the clear reduction of epistemic uncertainty as more calibration data is added to the credible set method.

\begin{figure}[htb]
    \centering
    \begin{subfigure}[b]{1\textwidth}
        \centering
        \includegraphics[width=0.775\textwidth]{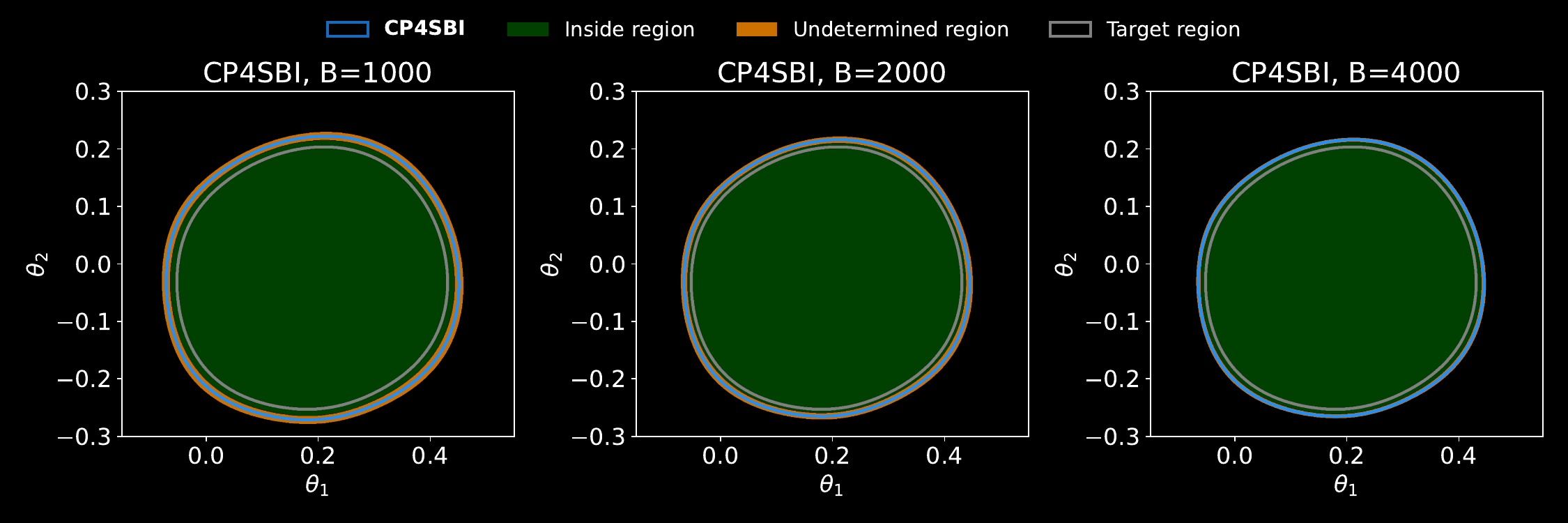} 
        \caption{Gaussian Linear task}
        \label{fig:gaussian_linear_sets}
    \end{subfigure}
    \hfill 
    \begin{subfigure}[b]{1\textwidth}
        \centering
        \includegraphics[width=0.775\textwidth]{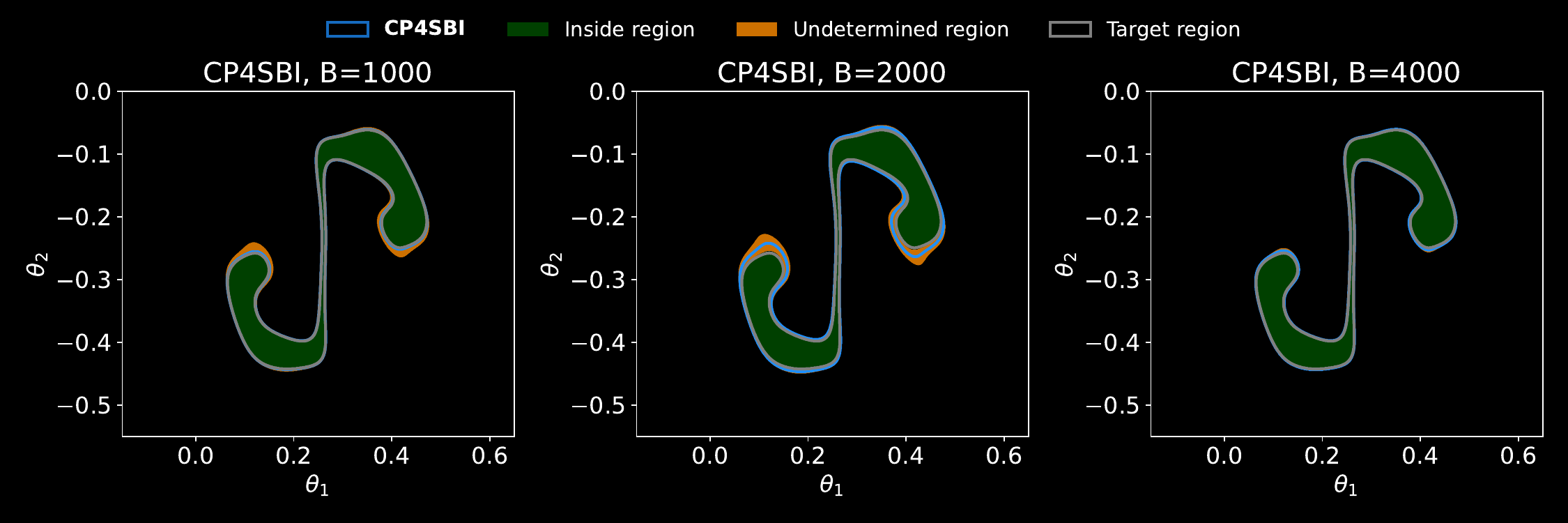} 
        \caption{Two Moons task.}
        \label{fig:two_moons_sets}
    \end{subfigure}
    \hfill
    \begin{subfigure}[b]{1\textwidth}
        \centering
        \includegraphics[width=0.775\textwidth]{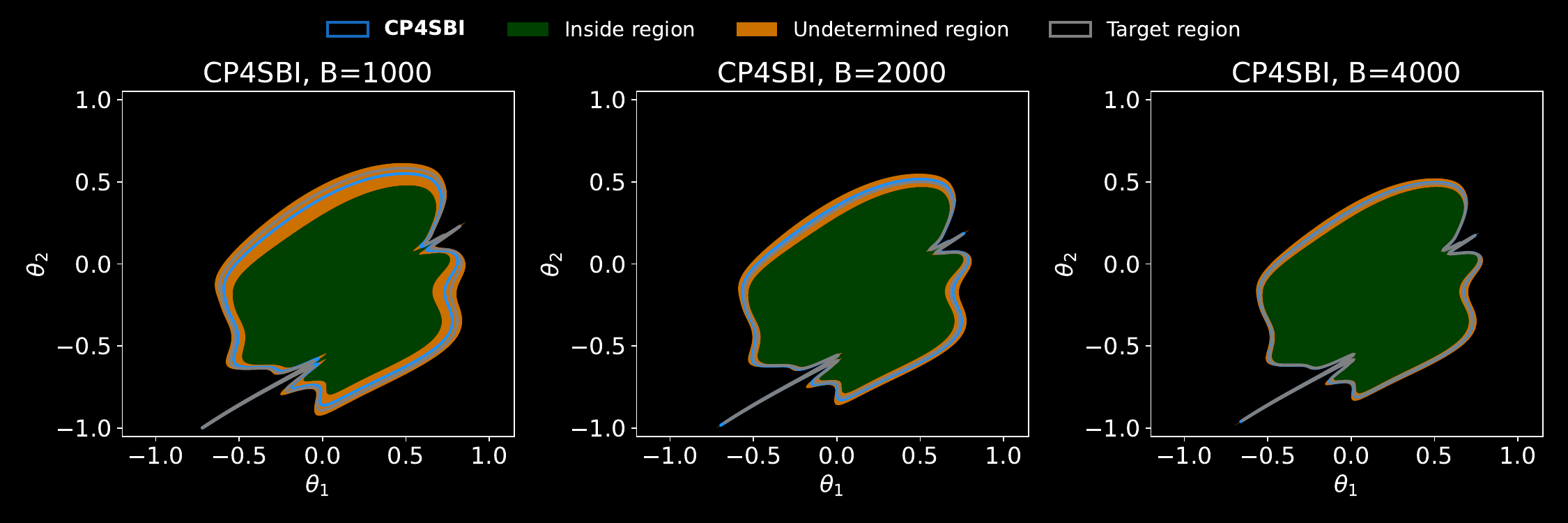} 
        \caption{Gaussian Mixture task.}
        \label{fig:gaussian_mixture_sets}
    \end{subfigure}
    \begin{subfigure}[b]{1\textwidth}
        \centering
        \includegraphics[width=0.775\textwidth]{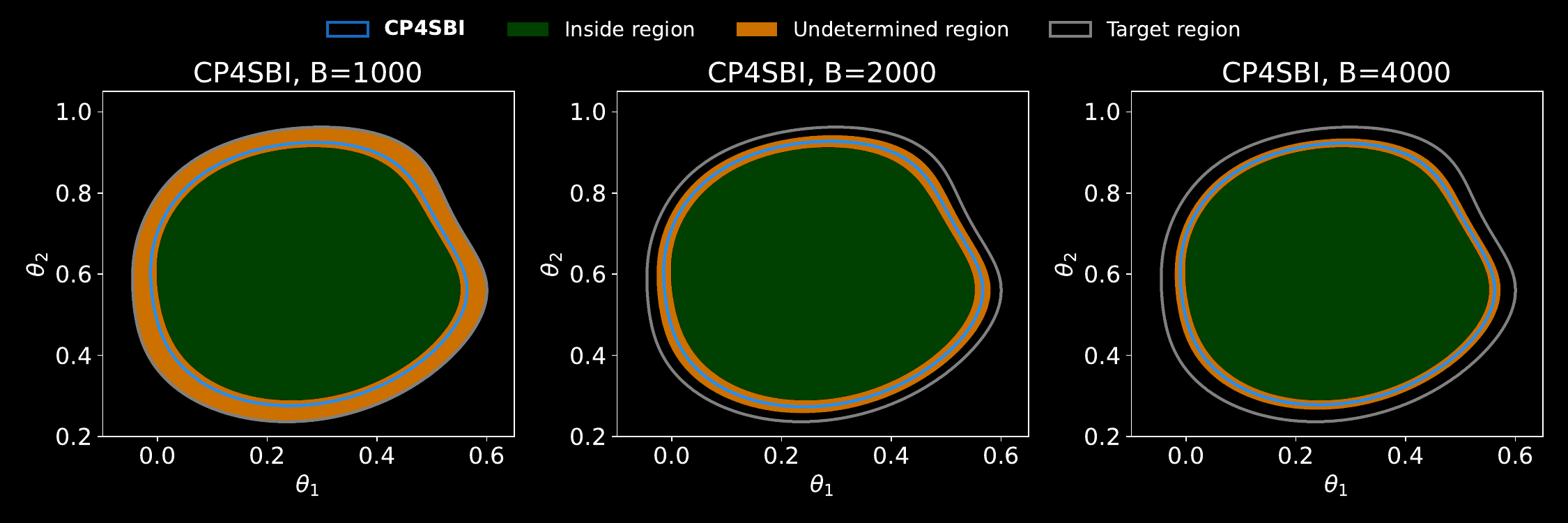} 
        \caption{Bernoulli GLM task.}
        \label{fig:bernoulli_glm}
    \end{subfigure}
    \caption{Epistemic Uncertainty Quantification on Easier/Medium SBI Benchmarks. The dark green and dark orange areas represent the confident interior (low epistemic uncertainty) and the undetermined region (high epistemic uncertainty), respectively. Contours show the \locartourmethod{} estimate (light blue) and the target region (grey). We consistently observe that for every task shown, the epistemic uncertainty (dark orange area) decreases as the calibration budget ($B$) increases.}
    \label{fig:easy_medium_epistemic_uncertainty}
\end{figure}

\begin{figure}[htb]
    \centering
    \begin{subfigure}[b]{1\textwidth}
        \centering
        \includegraphics[width=0.775\textwidth]{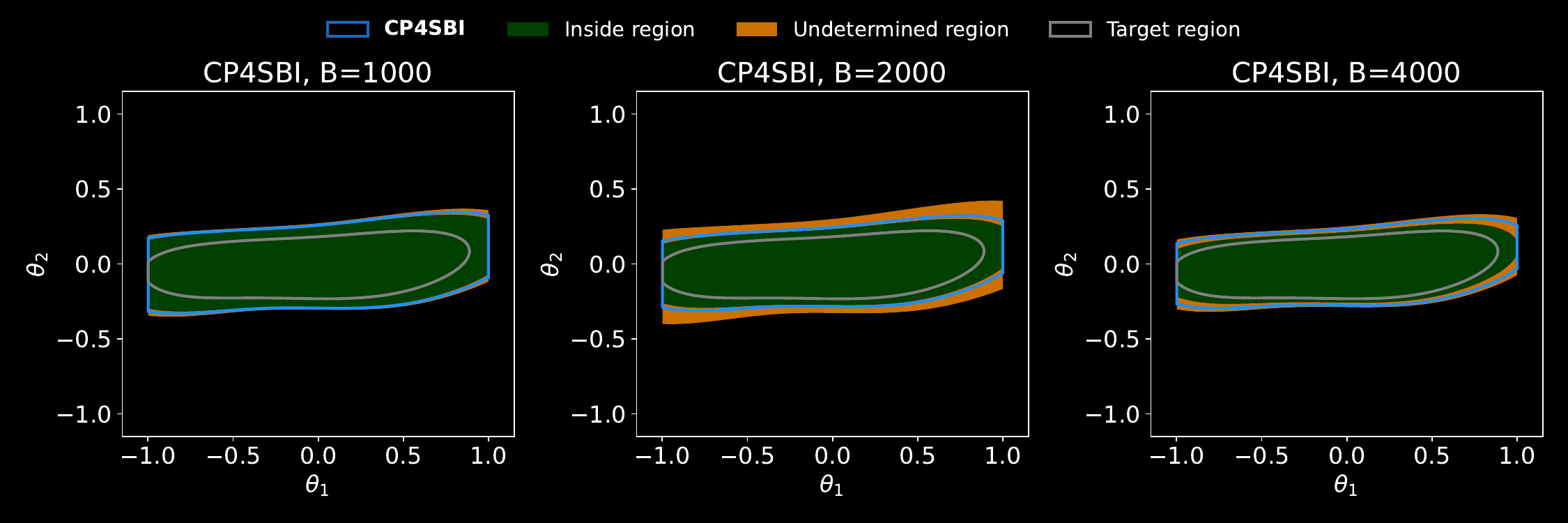} 
        \caption{SLCP task.}
        \label{fig:slcp_sets}
    \end{subfigure}
    \hfill 
    \begin{subfigure}[b]{1\textwidth}
        \centering
        \includegraphics[width=0.775\textwidth]{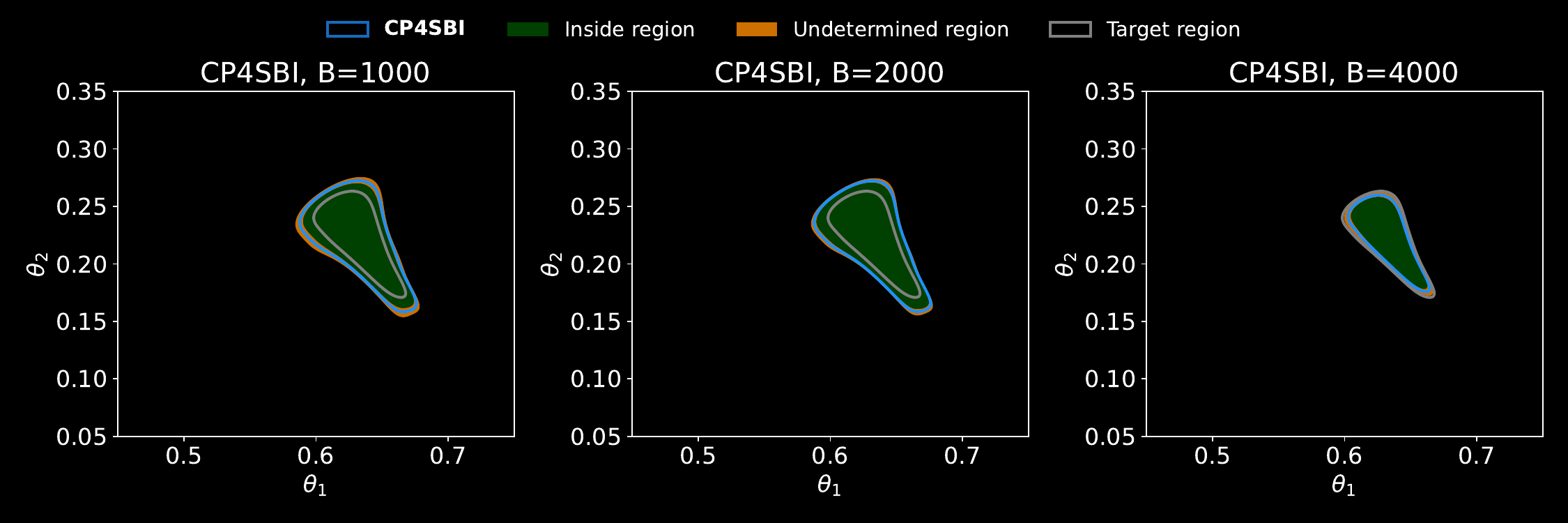} 
        \caption{SIR task.}
        \label{fig:sir_sets}
    \end{subfigure}
    \hfill
    \begin{subfigure}[b]{1\textwidth}
        \centering
        \includegraphics[width=0.775\textwidth]{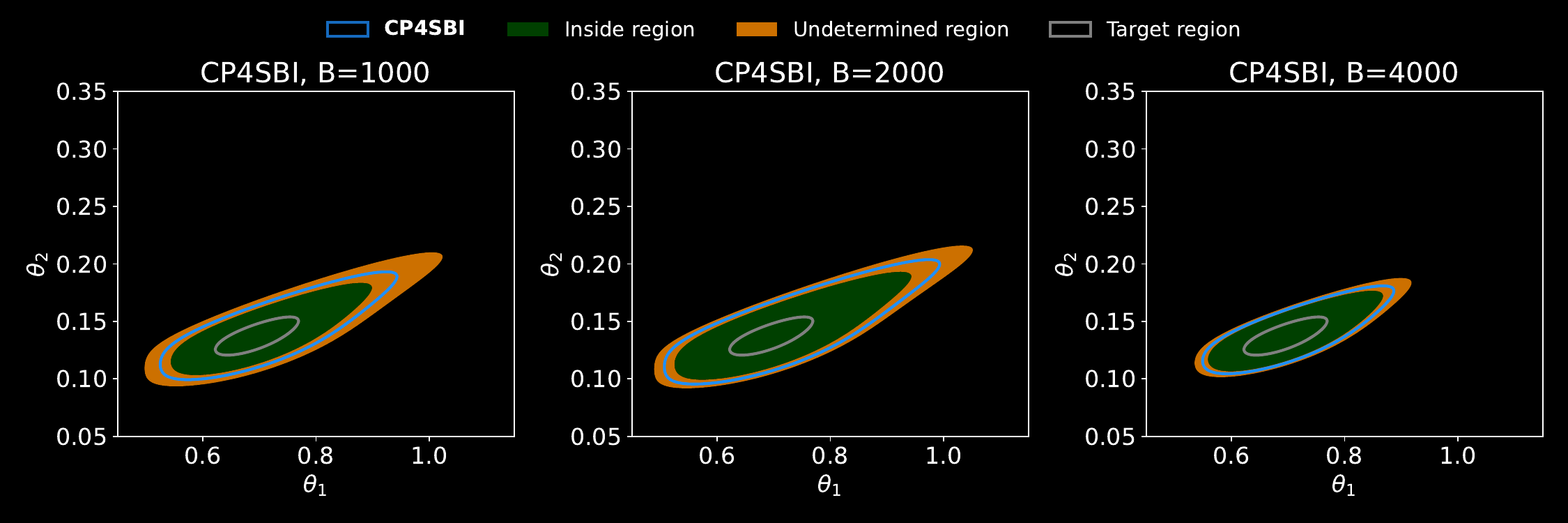} 
        \caption{Lotka-Volterra task.}
        \label{fig:lotka_volterra_sets}
    \end{subfigure}
    \caption{Epistemic Uncertainty Quantification on Harder SBI Benchmarks. The dark green and dark orange areas represent the confident interior (low epistemic uncertainty) and the undetermined region (high epistemic uncertainty), respectively. Contours show the \locartourmethod{} estimate (light blue) and the target region (grey). We confirm the general tendency: the epistemic uncertainty visibly decreases as the calibration budget ($B$) increases.}
    \label{fig:hard_epistemic_uncertainty}
\end{figure}

\section{Proofs}
\label{appendix:proofs}
In this appendix we detail the formal proof of the theoretical result stated in Section \ref{sec:methods} along with the motivation behind Assumption \ref{ass:coverage} and the reason why it is a plausible assumption to make in our context.

\subsection{Approximate validity of Assumption \ref{ass:coverage}}
Let $m=|I_{A(\x)}|$ and let
\[
W=\sum_{b\in I_{A(\x)}} \mathbf{1}_{\{s(\theta_b; \X_b)\le t(\x)\}}.
\]
Conditionally on $\X\in A(\x)$, the indicators are independent. Moreover, by the construction of \ourmethod's partition, they are also approximately identically distributed with
\[
p:=\P\big(s(\theta;\X)\le t(\x)\mid \X\in A(\x)\big)\approx 1-\alpha,
\]
so $W\approx Z$, where $Z\sim\mathrm{Binomial}(m,1-\alpha)$. Since $s^{(1)}_{A(\x)}\le\cdots\le s^{(m)}_{A(\x)}$ are the order statistics of $T_{A(\x)}$,
\[
\{t(\x)\le s^{(l)}_{A(\x)}\}\iff \{W\le l-1\}.
\]
By the definition of $l$ (chosen so that $\P(Z\le l-1)\le \beta/2$),
\[
\P\big(t(\x)\le s^{(l)}_{A(\x)}\big)=\P(W\le l-1)\ \approx\ \P(Z\le l-1)\ \le\ \beta/2.
\]
Analogously, with $u$ chosen so that $\P(Z\ge u)\le \beta/2$,
\[
\P\big(t(\x)\ge s^{(u)}_{A(\x)}\big)=\P(W\ge u)\ \approx\ \P(Z\ge u)\ \le\ \beta/2.
\]

\subsection{Proof of Thereom \ref{thm:coverage}}
\begin{proof}
Fix $\x \in \mathcal{X}$ and $\theta\in\Theta$. Note that $s(\theta;\x)$ is deterministic (given $\x,\theta$); the only randomness comes from the calibration set through $s^{(l)}_{A(\x)}$ and $s^{(u)}_{A(\x)}$.

\medskip
\noindent{(i) If $\theta\notin C(\x)$,} then $s(\theta;\x) > t(\x)$. Moreover, the event $\{\theta\in\mathcal{I}(\mathbf{x})\}$ is $\{s(\theta;\mathbf{x}) \le s^{(l)}_{A(\mathbf{x})}\}$. Since $s(\theta;\mathbf{x}) > t(\mathbf{x})$, it follows that
\[
\{s(\theta;\mathbf{x}) \le s^{(l)}_{A(\mathbf{x})}\} \subseteq \{t(\mathbf{x}) \le s^{(l)}_{A(\mathbf{x})}\}.
\]
Hence, by Assumption~\ref{ass:coverage},
\[
\mathbb{P}\big(\theta\in\mathcal{I}(\mathbf{x})\big) \le \mathbb{P}\big(t(\mathbf{x}) \le s^{(l)}_{A(\mathbf{x})}\big) \le \beta/2.
\]

\medskip
\noindent{(ii) If $\theta\in C(\mathbf{x})$,} then $s(\theta;\mathbf{x}) \le t(\mathbf{x})$. Moreover, the event $\{\theta\in\mathcal{O}(\mathbf{x})\}$ is $\{s(\theta;\mathbf{x}) \ge s^{(u)}_{A(\mathbf{x})}\}$. Since $s(\theta;\mathbf{x}) \le t(\mathbf{x})$, it follows that
\[
\{s(\theta;\mathbf{x}) \ge s^{(u)}_{A(\mathbf{x})}\} \subseteq \{t(\mathbf{x}) \ge s^{(u)}_{A(\mathbf{x})}\}.
\]
Therefore, by Assumption~\ref{ass:coverage},
\[
\mathbb{P}\big(\theta\in\mathcal{O}(\mathbf{x})\big) \le \mathbb{P}\big(t(\mathbf{x}) \ge s^{(u)}_{A(\mathbf{x})}\big) \le \beta/2.
\]
\end{proof}


\end{document}